\newtheorem{theorem}{Theorem}
\newtheorem{corollary}[theorem]{Corollary}
\newtheorem{lemma}[theorem]{Lemma}
\theoremstyle{definition}
\newcommand{\bigO}{\mathcal{O}}
\newcommand{\expect}{\mathbb{E}}
\newcommand{\prob}{\mathbb{P}}
\newcommand{\reals}{\mathbb{R}}
\renewcommand{\epsilon}{\varepsilon}
\renewcommand{\leq}{\leqslant}
\renewcommand{\geq}{\geqslant}
\newcommand{\fGr}[1][r]{{f_{G,{#1}}}}
\newcommand{\fbar}{{\bar{f}}}
 \title {Approximating Fixation Probabilities in\\
        the Generalized Moran Process%
        \thanks{A preliminary version of this work appeared in
                \emph{Proceedings of the ACM--SIAM Symposium on Discrete
                Algorithms (SODA)}, pages~954--960, 2012.}}
\author{Josep D{\'i}az%
        \thanks{\protect\raggedright
                Departament de Llenguatges i Sistemes Inform{\'a}tics,
                Universitat Polit{\'e}cnica de Catalunya, Spain. 
                Email: \{\texttt{diaz},
                            \texttt{mjserna}\}\texttt{@lsi.upc.edu}\,.},
        Leslie Ann Goldberg%
        \thanks{\protect\raggedright
                Department of Computer Science, University of Liverpool,
                UK.  Email: \{\texttt{L.A.Goldberg},
                \texttt{David.Richerby}\}\texttt{@liverpool.ac.uk}.
                Supported by EPSRC grant EP/I011528/1
                    \emph{Computational Counting}.},
        George B.~Mertzios%
        \thanks{\protect\raggedright
                School of Engineering and Computing Sciences,
                Durham University, UK.
                Email: \texttt{george.mertzios@durham.ac.uk}\,.},\\
        David Richerby\footnotemark[3], 
        Maria Serna\footnotemark[2]\ \ and 
        Paul G.~Spirakis%
        \thanks{\protect\raggedright
                Department of Computer Engineering and Informatics,
                University of Patras, Greece.
                Email: \texttt{spirakis@cti.gr}\,.}}
\date{\vspace{-0.8cm}}
\begin{document}
\maketitle{}

\begin{abstract}
    \noindent We consider the Moran process, as generalized by
    Lieberman, Hauert and Nowak (\emph{Nature}, 433:312--316, 2005).
    A population resides on the vertices of a finite, connected, undirected
    graph and, at each time step, an individual is chosen at random
    with probability proportional to its assigned ``fitness'' value.
    It reproduces, placing a copy of itself on a neighbouring vertex
    chosen uniformly at random, replacing the individual that was
    there.  The initial population consists of a single mutant of
    fitness $r>0$ placed uniformly at random, with every other vertex
    occupied by an individual of fitness~1.  The main quantities of
    interest are the probabilities that the descendants of the initial
    mutant come to occupy the whole graph (fixation) and that they die
    out (extinction); almost surely, these are the only possibilities.
    In general, exact computation of these quantities by standard
    Markov chain techniques requires solving a system of linear
    equations of size exponential in the order of the graph so is not
    feasible.  We show that, with high probability,
    the number of steps needed to reach fixation or extinction is
    bounded by a polynomial in the number of vertices in the graph.
    This bound allows us
    to construct fully polynomial randomized approximation schemes
    (FPRAS) for the probability of fixation (when $r\geq 1$) and of
    extinction (for all $r>0$).

    \paragraph{Keywords:} Evolutionary dynamics, Markov-chain Monte
    Carlo, approximation algorithm.
\end{abstract}
 \section{Introduction}
\label{sec:Intro}

Population and evolutionary dynamics have been extensively studied
\cite{Traulsen09, Taylor06, Taylor04, Ohtsuki06, Karlin75, Broom09,
  Antal06}, usually with the assumption that the evolving population
has no spatial structure.  One of the main models in this area is the Moran
process~\cite{Moran58}.  The initial population contains
a single ``mutant'' with fitness $r>0$, with all other individuals
having fitness~1.  At each step of the process, an individual is
chosen at random, with probability proportional to its fitness.  This
individual reproduces, replacing a second individual, chosen uniformly
at random, with a copy of itself.  Population dynamics has also
been studied in the context of strategic interaction in evolutionary
game theory~\cite{Gintis00, Sandholm11, Hofbauer98, Kandori93,
  Imhof05}.

Lieberman, Hauert and Nowak~\cite{Nowak05, nowak06-book} introduced a
generalization of
the Moran process, where the members of the population are placed on the
vertices of a connected graph which is, in general,
directed.  In this model, the initial
population again consists of a single mutant of fitness $r>0$ placed
on a vertex chosen uniformly at random, with each other vertex
occupied by a non-mutant with fitness~1.  The individual that will
reproduce is chosen as before but now one of its neighbours is
randomly selected for replacement, either uniformly or according to a
weighting of the edges.  The original Moran process can be recovered
by taking the graph to be an unweighted complete graph.  In the present paper,
we consider the process on finite, unweighted, undirected graphs.

Several similar models describing particle interactions have been
studied previously, including the SIR and SIS epidemic
models~\cite[Chapter~21]{EasleyK10}, the voter model, the antivoter
model and the exclusion process~\cite{Liggett85, Aldous-online-book,
  Durrett88}.  Related models, such as the decreasing cascade
model~\cite{kempe05, Mossel07}, have been studied in the context of
influence propagation in social networks and other models have been
considered for dynamic monopolies~\cite{Berger01}.  However, these
models do not consider different fitnesses for the individuals.

In general, the Moran process on a connected, directed graph may end
with all vertices occupied by mutants or with no vertex occupied by a
mutant --- these cases are referred to as \emph{fixation} and
\emph{extinction},
respectively --- or the process may continue forever.  However, for
finite undirected graphs and finite strongly connected digraphs, the
process terminates almost surely, either at fixation or extinction.
At the other extreme, fixation is
impossible in the directed graph with vertices $\{x,y,z\}$ and edges
$\{\overrightarrow{xz}, \overrightarrow{yz}\}$ and extinction is
impossible unless the mutant starts at $z$.  The \emph{fixation
  probability} for a mutant of fitness $r$ in a graph $G$ is the
probability that fixation is reached and is denoted $\fGr$.

The fixation probability can, in principle, be determined by standard
Markov chain
techniques.  However, doing so for a general graph on $n$ vertices
requires solving a set of $2^n$ linear equations, which is not
computationally feasible, even numerically.  As a result, most prior
work on computing
fixation probabilities in the generalized Moran process has either
been restricted to small
graphs~\cite{Broom09} or graph classes where a high degree of symmetry
reduces the size of the set of equations --- for example, paths,
cycles, stars and complete graphs~\cite{Broom08, Broom-two-results, Broom10}
--- or has concentrated on finding graph classes that either encourage
or suppress the spread of the mutants~\cite{Nowak05, MNRS:Evolution}.
Rycht\'a\v{r} and Stadler present some experimental results on fixation
probabilities for random graphs derived from
grids~\cite{RS2008:Smallworlds}.

Because of the apparent intractability of exact computation, we turn to
approximation.  Using a potential function argument, we show that,
with high probability, the Moran process on an undirected graph of
order $n$, with a single initial mutant chosen uniformly at random,
 reaches absorption (either fixation or extinction) within
$\bigO(n^6)$ steps if $r=1$ and $\bigO(n^4)$ and $\bigO(n^3)$ steps
when $r>1$ and $r<1$, respectively.  Taylor et al.~\cite{Taylor06}
studied absorption times for variants of the generalized Moran process
but, in our setting, their results only apply to the process on
regular graphs, where it is equivalent to a biased random walk on a
line with absorbing barriers.  The absorption time analysis of Broom
et al.~\cite{Broom10} is also restricted to complete graphs, cycles and stars.
In contrast to this earlier work, our results apply to all connected
undirected graphs.

When $r=1$, we show that the fixation probability is $\tfrac{1}{n}$ on
any connected $n$-vertex graph.  For $r\neq 1$, our bound on the
absorption time, along with polynomial upper and
lower bounds for the fixation probability, allows the estimation of
the fixation and extinction probabilities by Monte Carlo techniques.
Specifically, we give a \emph{fully polynomial randomized
  approximation scheme} (FPRAS) for these quantities.  An FPRAS for a
function $f(X)$ is a polynomial-time randomized algorithm $g$ that,
given input $X$ and an error bound $\epsilon$ satisfies
$(1-\epsilon)f(X) \leq g(X) \leq (1+\epsilon)f(X)$ with probability at
least $\frac{3}{4}$ and runs in time polynomial in the length of $X$
and $\frac{1}{\epsilon}$~\cite{KL1983:Monte-Carlo}.  (The probability
can be ``boosted'' to any value in $[\tfrac{3}{4},1)$ at small cost
  \cite{JVV1986:Randgen}.)

For the case $r<1$, the fixation probability may be exponentially
small (see Section~\ref{sec:Pfix}).  As a result, there is no positive
polynomial lower bound on the fixation probability so only the
extinction probability can be approximated by the above Monte Carlo
technique.  (Note that, when $f \ll 1$, computing
$1-f$ to within a factor of $1\pm \epsilon$ does not imply computing
$f$ to within the same factor.)

\paragraph{Notation.} We consider only finite, connected, undirected
graphs $G=(V,E)$ and we write $n = |V|$ (the \emph{order} of the
graph).  Our results apply only to connected graphs as, otherwise, the
fixation probability is necessarily zero; we also exclude the
one-vertex graph to avoid trivialities.  The edge between vertices $x$
and $y$ is denoted by $xy$.  For a subset $X\subseteq V(G)$, we write
$X+y$ and $X-y$ for $X\cup\{y\}$ and $X\setminus\{y\}$, respectively.

Throughout, $r$ denotes the fitness of the mutants.
A \emph{state} of the Moran process is the set
of vertices occupied by mutants at a given time.  The \emph{total
fitness} of the state $S\subseteq V(G)$ is $W(S)=r|S| + |V\setminus S|$.
We write $\fGr(S)$ for the fixation probability of
$G$, when the initial state is $S$ and, for $x\in V(G)$, we
write $\fGr(x)$ for $\fGr(\{x\})$.  We denote by $\fGr = \frac{1}{n}
\sum_{x\in V} \fGr(x)$ the \emph{fixation probability} of $G$; that
is, the probability that a single mutant with fitness $r$ placed
uniformly at random in $V$ eventually takes over the graph $G$.
The \emph{absorption time} of a Moran process $(X_i)_{i\geq 0}$ is the
random variable $\tau = \min\,\{i\mid X_i=0 \text{ or } X_i=V(G)\}$.
Finally, we define the problem \textsc{Moran fixation} (respectively,
\textsc{Moran extinction}) as follows: given a graph $G=(V,E)$ and a
fitness value $r>0$, compute the value $\fGr$ (respectively,
$1-\fGr$).

\paragraph{Organization of the paper.}  In Section~\ref{sec:Pfix}, we
investigate the fixation probability when $r=1$ and demonstrate
polynomial upper and lower bounds for $\fGr$ for any $r\geq 0$.  In
Section~\ref{sec:Absorb}, we use our potential function to derive
polynomial bounds on the absorption time (both in expectation and with
high probability) in general undirected graphs.  Our FPRAS for
computing fixation and extinction probabilities appears in
Section~\ref{sec:FPRAS}.

 \section{Bounding the fixation probability}
\label{sec:Pfix}

Lieberman et al.\ \cite{Nowak05} (see also~\cite[p.~135]{nowak06-book})
observed that, if $G$ is a directed graph with a
single source (a vertex with in-degree zero), then $\fGr =
\frac{1}{n}$, independent of the fitness of the mutants.  We first
show that, when $r=1$, the fixation probability is also
$\tfrac{1}{n}$, independent of the graph structure.

\begin{lemma}
\label{lemma:fG-one}
    Let $G=(V,E)$ be an undirected graph with $n$ vertices. Then
    $\fGr[1] = \frac{1}{n}$.
\end{lemma}
\begin{proof}
    Consider the variant of the process where every vertex starts with
    its own colour and every vertex has fitness~1.  Allow the process
    to evolve as usual: at each step, a vertex is chosen uniformly at
    random and its colour is propagated to a neighbour also chosen
    uniformly at random.  At any time, we can consider the vertices of
    any one colour to be the mutants and all the other vertices to be
    non-mutants.  Hence, with probability~1, some colour will take
    over the graph and the probability that $x$'s initial colour takes
    over is exactly $\fGr[1](x)$.  Thus, $\fGr[1] = \frac{1}{n}
    \sum_{x\in V} \fGr[1](x) = \frac{1}{n}$.
\end{proof}

This allows us to give a lower bound on the fixation probability
whenever $r\geq 1$.

\begin{corollary}
\label{cor:fG-lowerbound}
    Let $G=(V,E)$ be an undirected graph with $n$ vertices. Then $\fGr
    \geq \frac{1}{n}$ for any $r\geq 1$.
\end{corollary}
\begin{proof}
    By \cite[Theorem~6]{SR2011:FastFixProb}, $\fGr\geq \fGr[1]$ for
    any $r\geq 1$.
\end{proof}

The process used to prove Lemma~\ref{lemma:fG-one} also gives a short
and direct proof of the following result of Shakarian and
Roos~\cite[Theorem~5]{SR2011:FastFixProb}.

\begin{corollary}
    For $S_1,S_2\subseteq V(G)$,
    \begin{equation*}
        \fGr[1](S_1\cup S_2) = \fGr[1](S_1) + \fGr[1](S_2)
                               - \fGr[1](S_1\cap S_2)\,.
    \end{equation*}
\end{corollary}
\begin{proof}
    As before, we give each vertex its own colour and fitness~1 and,
    at any point, we can consider any subset of the colours to be the
    mutants.  For any $S\subseteq V$, $\fGr[1](S)$ is the probability
    that, eventually, only colours from $S$ remain.  With
    probability~1, a single colour will then take over, so $\fGr[1](S)
    = \sum_{x\in S} \fGr[1](x)$ and the result is immediate.
\end{proof}

Note that there is no polynomial lower bound corresponding to
Corollary~\ref{cor:fG-lowerbound} when $r<1$.  For example, for $r\neq
1$, the fixation probability of the complete graph $K_n$ is given by
\begin{equation*}
    f_{K_n,r} = \frac{1-\frac{1}{r}}{1-\frac{1}{r^n}}\,.
\end{equation*}
For $r>1$, this is at least $1-\frac{1}{r}$ but there is
no positive polynomial lower bound where $r<1$.

\begin{lemma}
\label{lemma:fG-upperbound}
    Let $G=(V,E)$ be an undirected graph with $n$ vertices. Then
    $\fGr\leq 1 - \frac{1}{n+r}$ for any $r > 0$.
\end{lemma}
\begin{proof}
    For any vertex $x\in V\!$, let $Q(x) = \sum_{xy\in E}
    \frac{1}{\deg y}$, so $\sum_{x\in V} Q(x) = n$.

    To give an upper bound for $\fGr(x)$ for every $x\in V\!$, we
    relax the Markov chain by assuming that fixation is reached as
    soon as a second mutant is created.  From the state $S = \{x\}$,
    the probability that a new mutant is created is $a(x) =
    \frac{r}{n-1+r}$ and the probability that one of $x$'s non-mutant
    neighbours reproduces into $x$ is $b(x) = \frac{1}{n-1+r}Q(x)$.
    The probability that the population stays the same, because a
    non-mutant reproduces to a non-mutant vertex, is $1-a(x)-b(x)$.
    The probability that the mutant population reaches two (i.e., that
    the first change to the state is the creation of a new mutant) is
    given by
    \begin{equation*}
        p(x) = \frac{a(x)}{a(x) + b(x)} = \frac{r}{r + Q(x)}\,.
    \end{equation*}
    Therefore, the probability that the new process reaches fixation is
    \begin{equation*}
        p = \frac{1}{n} \sum_{x\in V} p(x)
                         = \frac{r}{n} \sum_{x\in V}\frac{1}{r + Q(x)}\,.
    \end{equation*}

    Writing $p = \frac{r}{n}\sum_{i=1}^n(r + q_i)^{-1}\!$, we wish to
    find the maximum value of $p$ subject to the constraints that $q_i>0$ for
    all $i$ and $\sum_{i=1}^n q_i = \sum_{x\in V} Q(x) = n$.  If we relax
    the first constraint to $q_i\geq 0$, the sum is maximized by
    setting $q_1 = n$ and $q_2 = \dots = q_n = 0$.  Therefore,
    \begin{equation*}
        \fGr \leq p\leq \frac{r}{n} \left( \frac{1}{r+n}
                                            + (n-1)\frac{1}{r+0} \right)
            = 1 - \frac{1}{r+n}\,.\qedhere
    \end{equation*}
\end{proof}

 \section{Bounding the absorption time}
\label{sec:Absorb}

In this section, we show that the Moran process on a connected graph
$G$ of order $n$ is expected to reach absorption in a polynomial
number of steps.  To do this, we use the potential function given by
\begin{equation*}
    \phi(S) = \sum_{x\in S} \frac{1}{\deg x}
\end{equation*}
for any state $S\subseteq V(G)$ and we write $\phi(G)$ for
$\phi(V(G))$.  Note that $1 < \phi(G) < n$ and that $\phi(\{x\}) =
1/\deg x\leq 1$ for any vertex $x\in V$.

First, we show that the potential strictly increases in expectation
when $r > 1$ and strictly decreases in expectation when $r<1$.

\begin{lemma}
\label{lemma:phi-bound}
    Let $(X_i)_{i\geq 0}$ be a Moran process on a graph $G=(V,E)$
    and let $\emptyset\subset S \subset V\!$.  If $r \geq 1$, then
    \begin{equation*}
        \expect[\phi(X_{i+1}) - \phi(X_i) \mid X_i=S] \geq
        \left(1 - \frac{1}{r}\right) \cdot \frac{1}{n^3}\,,
    \end{equation*}
    with equality if, and only if, $r=1$. For $r<1$,
    \begin{equation*}
        \expect[\phi(X_{i+1}) - \phi(X_i) \mid X_i=S] < \frac{r-1}{n^3}\,.
    \end{equation*}
\end{lemma}
\begin{proof}
    Write $W(S) = n + (r-1)|S|$ for the total fitness of the
    population.  For $\emptyset \subset S \subset V\!$, and any
    value of $r$, we have
    {
    \newcommand{\thesumof}[1]{\frac{{#1}}{W(S)}\!\!\sum_{\substack{xy\in E \\ x\in S, y\in \overline{S}}}\!\!{}}
    \begin{align}
        \hspace{4em}&\hspace{-4em}
        \expect[\phi(X_{i+1}) -\phi(X_i) \mid  X_i = S] \notag\\
            &= \thesumof{1} \left(
                   r\cdot \frac{\phi(S + y) - \phi(S)}{\deg x}
                   + \frac{\phi(S-x) - \phi(S)}{\deg y}\right)\notag\\
            &= \thesumof{1} \left(
                   r \cdot \frac{1}{\deg y}\cdot\frac{1}{\deg x}
                   - \frac{1}{\deg x}\cdot\frac{1}{\deg y}\right)\notag\\
    \label{eq:deltaphi}
            &= \thesumof{r-1}
                    \frac{1}{\deg x \deg y}\,.
    \end{align}}

    This is clearly zero if $r=1$.  Otherwise, the sum is minimized in
    absolute value by noting that there must be at least one
    edge between $S$ and $\overline{S}$ and that its endpoints have
    degree at most $(n-1) < n$.  The greatest-weight state is
    the one with all mutants if $r > 1$ and the one with no mutants
    if $r < 1$.  Therefore, if $r > 1$, we have
    \begin{equation*}
        \expect[\phi(X_{i+1}) -\phi(X_i) \mid  X_i = S]
            \ >\ \frac{r-1}{rn}\cdot \frac{1}{n^2}
            \ =\ \left(1 - \frac{1}{r}\right)\frac{1}{n^3}
    \end{equation*}
    and, if $r<1$,
    \begin{equation*}
        \expect[\phi(X_{i+1}) -\phi(X_i) \mid  X_i = S]
            \ <\ (r-1) \frac{1}{n^3}\,.\qedhere
    \end{equation*}
\end{proof}

The method of bounding in the above proof appears somewhat crude ---
for example, in a graph of order $n>2$, if both endpoints of the chosen
edge from $S$ to $\overline{S}$ have degree $n-1$ then there must be
more edges between mutants and non-mutants.  Nonetheless, over the
class of all graphs, the bound of Lemma~\ref{lemma:phi-bound} is
asymptotically optimal up to constant factors.  For $n\geq 2$, let
$G_n$ be the $n$-vertex graph made by adding an edge between the centres of two
disjoint stars of as close-to-equal size as possible.  If $S$ is the
vertex set of one of the stars, $\expect[\phi(X_{i+1}) - \phi(X_i)
  \mid X_i=S] = \Theta(n^{-3})$.

However, it is possible to specialize equation~\eqref{eq:deltaphi}
to give better
bounds for restricted classes of graphs.  For example, if we consider
graphs of bounded degree then $(\deg x \deg y)^{-1} = \bigO(1)$ and
the expected change in $\phi$ is $\bigO(\frac{1}{n})$.

To bound the expected absorption time, we use martingale techniques.
It is well known how to bound the expected absorption time using a
potential function that decreases in expectation until absorption.
This has been made explicit by Hajek~\cite{Haj1982:Hitting-time-drift}
and we use the following formulation based on that of He and
Yao~\cite{HY2001:Drift}.  The proof is essentially theirs but modified
to give a slightly stronger result.

\begin{theorem}
\label{thrm:He-Yao}
    Let $(Y_{\!i})_{i\geq 0}$ be a Markov chain with state space
    $\Omega$, where $Y_{\!0}$ is chosen from some set $I\subseteq
    \Omega$.  If there are constants $k_1,k_2>0$ and a non-negative
    function $\psi\colon \Omega\to \reals$ such that
    \begin{itemize}
    \item $\psi(S) = 0$ for some $S\in\Omega$,
    \item $\psi(S)\leq k_1$ for all $S\in I$ and
    \item $\expect[\psi(Y_i) - \psi(Y_{i+1}) \mid Y_i = S] \geq k_2$ for
          all $i\geq 0$ and all $S$ with $\psi(S)>0$,
    \end{itemize}
    then $\expect[\tau] \leq k_1/k_2$, where $\tau = \min\,\{i:
    \psi(Y_i) = 0\}$.
\end{theorem}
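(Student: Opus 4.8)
The plan is to study the process stopped at the hitting time $\tau$ and to exploit the additive drift directly through a telescoping sum, turning the per-step decrease of $\psi$ into a bound on $\expect[\tau] = \sum_{i \geq 0} \prob[\tau > i]$. Since we are not told in advance that $\tau$ is finite, I would avoid conditioning on $\{\tau < \infty\}$ and instead work throughout with the truncated time $i \wedge \tau$, deducing finiteness of $\expect[\tau]$ as a byproduct at the end rather than assuming it.

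Concretely, first I would fix an integer $N$ and telescope the stopped potential:
\begin{equation*}
    \psi(Y_{\!0}) - \psi(Y_{N \wedge \tau}) = \sum_{i=0}^{N-1}\bigl(\psi(Y_{i \wedge \tau}) - \psi(Y_{(i+1)\wedge\tau})\bigr)\,.
\end{equation*}
Taking expectations and conditioning each summand on the state at time $i$, the terms with $i \geq \tau$ vanish because the stopped process no longer moves, while on the event $\{i < \tau\}$ the state $Y_{\!i}$ satisfies $\psi(Y_{\!i}) > 0$ by definition of $\tau$, so the drift hypothesis applies and gives $\expect[\psi(Y_{\!i}) - \psi(Y_{i+1}) \mid Y_{\!i}] \geq k_2$. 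Hence each summand is at least $k_2\,\prob[\tau > i]$, and summing yields $\expect[\psi(Y_{\!0})] - \expect[\psi(Y_{N\wedge\tau})] \geq k_2 \sum_{i=0}^{N-1}\prob[\tau > i]$.

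To close the argument I would drop the nonnegative term $\expect[\psi(Y_{N\wedge\tau})]\geq 0$ on the left and use $\psi(Y_{\!0}) \leq k_1$, which holds deterministically since $Y_{\!0} \in I$. This gives $k_2 \sum_{i=0}^{N-1}\prob[\tau > i] \leq k_1$ for every $N$; letting $N \to \infty$ by monotone convergence of the partial sums produces $\expect[\tau] = \sum_{i \geq 0}\prob[\tau > i] \leq k_1/k_2$, which in particular shows $\tau < \infty$ almost surely.

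The main obstacle is handling the possibility that $\tau$ is infinite together with the fact that the drift is assumed only at states with $\psi(S) > 0$; both are dealt with cleanly by the truncation at $i \wedge \tau$, which confines every application of the drift inequality to the event $\{i < \tau\}$, where positivity of $\psi(Y_{\!i})$ is guaranteed, and which keeps all expectations finite so that the limit $N \to \infty$ is justified. Equivalently, one could phrase the same computation by verifying that $W_{\!i} = \psi(Y_{i\wedge\tau}) + k_2\,(i\wedge\tau)$ is a nonnegative supermartingale and applying $\expect[W_N] \leq \expect[W_0] \leq k_1$, but the telescoping form has the advantage of not invoking optional stopping explicitly.
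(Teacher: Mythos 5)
Your proposal is correct, and it is a genuinely different (and in fact tighter) execution of the same additive-drift idea. The paper's proof first invokes a supermartingale convergence theorem to get $\psi(Y_{\!i})\to 0$ almost surely, then runs an induction to obtain $\expect[\psi(Y_{\!i})]\leq\expect[\psi(Y_{\!0})]-ik_2$, and finally substitutes the random time $\tau$ into that inequality --- a step that implicitly relies on optional stopping and glosses over the fact that the drift hypothesis is only available at states with $\psi>0$. Your truncation at $i\wedge\tau$ handles exactly these two points cleanly: every application of the drift bound is confined to the event $\{\tau>i\}$, where $\psi(Y_{\!i})>0$ is guaranteed and which is measurable with respect to $Y_{\!0},\dots,Y_{\!i}$ so the Markov property lets you condition on $Y_{\!i}$ alone; and the identity $\expect[\tau]=\sum_{i\geq 0}\prob[\tau>i]$ together with the uniform bound $k_2\sum_{i=0}^{N-1}\prob[\tau>i]\leq k_1$ delivers both finiteness of $\tau$ and the quantitative bound in one stroke, with no appeal to convergence theorems or optional stopping. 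What the paper's route buys is brevity and a direct view of $\psi$ as a supermartingale with linear decay; what yours buys is a self-contained, fully elementary argument in which the only analysis needed is monotone convergence of a bounded increasing sequence of reals. The one detail worth spelling out when you write it up is the measurability of $\{\tau>i\}$ with respect to the history up to time $i$, which is what justifies replacing the conditional expectation given the past by the conditional expectation given $Y_{\!i}$.
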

\begin{proof}
    By the third condition, the chain is a supermartingale so
    it converges to zero almost surely
    \cite[Theorem~II-2-9]{Nev1975:Martingales}.
    \begin{align*}
        \expect[\psi(Y_{\!i}) &\mid \psi(Y_{\!0}) > 0 ] \\
            &\ =\ \expect\big[
                  \expect\big[\psi(Y_{\!i-1}) +
                      \big(\psi(Y_{\!i}) - \psi(Y_{\!i-1})\big)
                            \mid Y_{\!i-1}\big]
                   \mid  \psi(Y_{\!0}) > 0\big] \\
            &\ \leq\ \expect[\psi(Y_{\!i-1}) - k_2 \mid \psi(Y_{\!0}) > 0]\,.
    \end{align*}
    Induction on $i$ gives $\expect[\psi(Y_{\!i}) \mid \psi(Y_{\!0}) >
    0] \leq \expect[\psi(Y_0) - ik_2 \mid \psi(Y_{\!0}) > 0]$ and,
    from the definition of the stopping time $\tau$,
    \begin{align*}
        0\ &=\ \expect[\psi(Y_{\!\tau}) \mid \psi(Y_{\!0}) > 0] \\
           &\leq\ \expect[\psi(Y_{\!0})]
              - k_2\expect[\tau \mid \psi(Y_{\!0}) > 0 ]\\
           &\leq\ k_1 - k_2 \expect[\tau \mid \psi(Y_{\!0}) > 0]\,.
    \end{align*}

    The possibility that $\psi(Y_{\!0}) = 0$ can only decrease the
    expected value of $\tau$ since, in that case, $\tau = 0$.
    Therefore, $\expect[\tau] \leq \expect[\tau \mid \psi(Y_{\!0}) >
    0] \leq k_1/k_2$.
\end{proof}

\begin{theorem}
\label{thrm:abs-time-disadv}
    Let $G = (V,E)$ be a graph of order $n$.  For $r<1$ and any
    $S\subseteq V$, the absorption time $\tau$ of the Moran process on
    $G$ satisfies
    \begin{equation*}
        \expect[\tau \mid X_0=S] \leq \frac{1}{1-r}n^3 \phi(S)\,.
    \end{equation*}
\end{theorem}
\begin{proof}
    Let $(Y_i)_{i\geq 0}$ be the process on $G$ that behaves
    identically to the Moran process except that, if the mutants reach
    fixation, we introduce a new non-mutant on a vertex chosen
    uniformly at random.  That is, from the state $V\!$, we move to
    $V-x$, where $x$ is chosen uniformly at random, instead of staying in $V\!$.
    Writing $\tau' = \min\,\{i: Y_i = \emptyset\}$ for the absorption time
    of this new process, it is clear that $\expect[\tau \mid X_0=S] \leq
    \expect[\tau' \mid Y_0=S]$.

    The function $\phi$ meets the criteria for $\psi$ in the statement
    of Theorem~\ref{thrm:He-Yao} with $k_1 = \phi(S)$ and $k_2 =
    (1-r)n^{-3}\!$.  The first two conditions of the theorem are
    obviously satisfied.  For $Y_i\subset V\!$, the third condition is
    satisfied by Lemma~\ref{lemma:phi-bound} and we have
    \begin{equation*}
        \expect[\phi(Y_i) - \phi(Y_{i+1}) \mid Y_i = V]
            = \frac{1}{n}\sum_{x\in V} \frac{1}{\deg x}
            > \frac{1}{n} > k_2\,.
    \end{equation*}
    Therefore, $\expect[\tau \mid X_0=S] \leq \expect[\tau' \mid
    Y_0=S] \leq \frac{1}{1-r}n^3\phi(S)$.
\end{proof}

\begin{corollary}
\label{cor:abs-time-disadv-whp}
    Let $G = (V,E)$ be a graph of order $n$.  For $r<1$ and when the
    initial single mutant is chosen uniformly at random, the
    absorption time $\tau$ of the Moran process on $G$ satisfies
    \begin{equation*}
        \expect[\tau] \leq \frac{1}{1-r}n^3.
    \end{equation*}
    Further, the process reaches absorption within $t$ steps with
    probability at least $1-\epsilon$, for any $\epsilon\in(0,1)$ and
    any $t\geq \frac{1}{1-r} n^3 / \epsilon$.
\end{corollary}
\begin{proof}
    For the first part,
    \begin{equation*}
        \expect[\tau]
            \leq \sum_{x\in V} \frac{1}{n}
                     \cdot \frac{1}{1-r}n^3 \phi(\{x\})
            \leq \frac{1}{1-r}n^3
    \end{equation*}
    and the second part is immediate from Markov's inequality.
\end{proof}

For $r>1$, the proof needs slight adjustment because, in this case,
$\phi$ increases in expectation.

\begin{theorem}
\label{thrm:abs-time}
    Let $G = (V,E)$ be a graph of order $n$.  For $r>1$ and any
    $S\subseteq V$, the absorption time $\tau$ of the Moran process on
    $G$ satisfies
    \begin{equation*}
        \expect[\tau \mid X_0=S]
            \leq \frac{r}{r-1}n^3 \big(\phi(G) - \phi(S)\big)
            \leq \frac{r}{r-1}n^4.
    \end{equation*}
\end{theorem}
\begin{proof}
    Let $(Y_i)_{i\geq 0}$ be the process that behaves identically to
    the Moran process $(X_i)_{i\geq 0}$ except that, if $Y_j =
    \emptyset$, then $Y_{j+1} = \{x\}$, where $x$ is a vertex chosen
    uniformly at random.  Setting $\tau' = \min\,\{i: Y_i = V\}$, we
    have $\expect[\tau | X_0=S] \leq \expect[\tau' | Y_0=S]$.

    Putting $\psi(Y) = \phi(G) - \phi(Y)$, $k_1 = \psi(S)\leq n$ and
    $k_2 = (1-\frac{1}{r})n^{-3}$ satisfies the conditions of
    Theorem~\ref{thrm:He-Yao} --- the third condition follows from
    Lemma~\ref{lemma:phi-bound} for $\emptyset \subset Y_i \subset V$
    and
    \begin{equation*}
        \expect[\psi(Y_i) - \psi(Y_{i+1}) \mid Y_i = \emptyset]
            = \frac{1}{n}\sum_{x\in V} \frac{1}{\deg x}
            > \frac{1}{n} > k_2.
    \end{equation*}
    The result follows from Theorem~\ref{thrm:He-Yao}.
\end{proof}

\begin{corollary}
\label{cor:abs-time-whp}
    When $r>1$ and the initial single mutant is chosen uniformly at
    random, the absorption time of the Moran process on an $n$-vertex
    graph $G$ satisfies
    \begin{equation*}
        \expect[\tau] \leq \frac{r}{r-1}n^4.
    \end{equation*}
    Further, the process reaches absorption within $t$ steps with probability
    at least $1-\epsilon$, for any $\epsilon\in(0,1)$ and any $t\geq
    \frac{r}{r-1} n^3 \phi(G) / \epsilon$.
\end{corollary}
\begin{proof}
    The first part follows from the theorem and the fact that $\phi(G)
    - \phi(\{x\}) \leq n - \tfrac{1}{n} < n$ for any vertex $x$.
    The second part is by Markov's inequality.
\end{proof}

The $\bigO(n^4)$ bound in Corollary~\ref{cor:abs-time-whp} does not seem to
be very tight and could, perhaps, be improved by a more careful
analysis, which we leave for future work.  In simulations, every class
of graphs we have considered has had expected fixation time
$\bigO(n^3)$ for $r>1$.  The graphs $G_n$ described after
Lemma~\ref{lemma:phi-bound} are the slowest we have found but, even on
those graphs, the absorption time is, empirically, still $\bigO(n^3)$.
Note that $n-2 < \phi(G_n) < n-1$ so, for these graphs, even the bound
of $\frac{r}{r-1}n^3\phi(G_n)$ is $\bigO(n^4)$.

The case $r=1$ is more complicated as Lemma~\ref{lemma:phi-bound}
shows that the expectation is constant.  However, this allows us to
use standard martingale techniques and the proof of the following is
partly adapted from the proof of Lemma~3.4 in
\cite{LRS2001:Markov-planar-lattice}.

\begin{theorem}
\label{thrm:abs-martingale}
    The expected absorption time for the Moran process
    $(X_i)_{i\geq 0}$ with $r=1$ on a graph $G=(V,E)$ is at most
    $n^4(\phi(G)^2 - \expect[\phi(X_0)^2])$.
\end{theorem}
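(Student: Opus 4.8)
The plan is to exploit the fact that, by Lemma~\ref{lemma:phi-bound} with $r=1$, the potential $\phi(X_i)$ is a martingale: its expected one-step change is exactly zero. Zero drift carries no first-order information about the hitting time, so the standard device is to track the \emph{second} moment. Writing $\Delta_i = \phi(X_{i+1}) - \phi(X_i)$, the vanishing of $\expect[\Delta_i\mid X_i=S]$ gives
\begin{equation*}
    \expect[\phi(X_{i+1})^2 - \phi(X_i)^2 \mid X_i = S]
        = \expect[\Delta_i^2 \mid X_i = S]\,,
\end{equation*}
so $\phi(X_i)^2$ has a strictly positive drift driven entirely by the per-step variance. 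First I would compute this variance explicitly. Since $r=1$, each step selects a vertex uniformly and then one of its neighbours uniformly, so a change occurs only across an edge $xy$ with $x\in S$, $y\in\overline{S}$: selecting $x$ then $y$ (probability $\frac{1}{n\deg x}$) changes $\phi$ by $+\frac{1}{\deg y}$, while selecting $y$ then $x$ (probability $\frac{1}{n\deg y}$) changes it by $-\frac{1}{\deg x}$. Summing the squared changes over all boundary edges gives an exact expression for $\expect[\Delta_i^2\mid X_i=S]$.

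The key estimate is a uniform positive lower bound on this variance for every non-absorbing~$S$. Here I would argue exactly as in Lemma~\ref{lemma:phi-bound}: there is at least one edge between $S$ and $\overline{S}$ and every degree is at most $n-1<n$. Retaining a single term of the sum for one such edge bounds the variance below by $\frac{1}{n\deg x(\deg y)^2} > \frac{1}{n^4}$, so that $\expect[\phi(X_{i+1})^2 - \phi(X_i)^2 \mid X_i=S] > n^{-4}$ whenever $\emptyset\subset S\subset V$.

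Finally I would convert this drift into a bound on $\expect[\tau]$ by summing over time along the stopped process $\phi(X_{i\wedge\tau})^2$. Telescoping and applying the variance bound on the event $\{\tau>i\}$ (which is determined by $X_i$) yields, for every finite $T$,
\begin{equation*}
    \expect[\phi(X_{(T+1)\wedge\tau})^2] - \expect[\phi(X_0)^2]
        \ \geq\ \frac{1}{n^4}\sum_{i=0}^{T}\prob[\tau > i]\,.
\end{equation*}
The left-hand side is at most $\phi(G)^2 - \phi_0'$, since $\phi$ never exceeds $\phi(G)$ and $\expect[\phi(X_0)^2]=\phi_0'$ by definition. Letting $T\to\infty$ and using monotone convergence gives $\expect[\tau] = \sum_{i\geq 0}\prob[\tau>i] \leq n^4(\phi(G)^2 - \phi_0')$, as claimed. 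In fact $\phi(X_\tau)$ equals $0$ at extinction and $\phi(G)$ at fixation, so $\expect[\phi(X_\tau)^2] = \fGr\,\phi(G)^2$, which would give the marginally sharper constant $n^4(\fGr\,\phi(G)^2 - \phi_0')$; the weaker bound $\fGr\leq 1$ suffices for the statement.

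The main obstacle is the variance lower bound: everything downstream is a routine second-moment martingale computation, but the whole argument collapses if the per-step variance could be driven below every fixed polynomial. The content is therefore the observation that a boundary edge forces a change of magnitude at least $\frac{1}{n-1}$ with probability of order $n^{-2}$, producing the $n^{-4}$ floor. A secondary technical point is justifying the passage from the one-step drift to the hitting-time bound; I would avoid assuming $\expect[\tau]<\infty$ in advance by working throughout with the finite truncations $\phi(X_{(T+1)\wedge\tau})^2$ displayed above and letting $T\to\infty$ at the end.
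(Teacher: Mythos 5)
Your proposal is correct and is essentially the paper's own argument: the paper's submartingale $Z_t$ is, up to an additive constant, exactly your $\phi(X_t)^2 - n^{-4}t$ (since $\psi_t^2 - 2m\psi_t = \phi(X_t)^2 - m^2$), with the same $n^{-4}$ per-step variance floor obtained from a single boundary edge. The only real difference is that you telescope finite truncations and let $T\to\infty$ rather than invoking the optional stopping theorem, which neatly sidesteps the paper's separate block argument establishing $\expect[t_0]<\infty$.
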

\begin{proof}
    Let $m = \phi(G)/2$ and let $\psi_i = m - \phi(X_i)$.  Thus,
    $-m\leq \psi_i\leq m$ for all $i$.
    By Lemma~\ref{lemma:phi-bound}, $\expect[\phi(X_{i+1}) \mid X_i] \geq
    \phi(X_i)$ so
    \begin{equation}
    \label{eq:psi-decreases}
        \expect[\psi_{i+1} \mid X_i] \leq \psi_i\,.
    \end{equation}
    (In fact, by Lemma~\ref{lemma:phi-bound}, $\expect[\psi_{i+1} \mid
      X_i] = \psi_i$ but we do not need this.)

    From the definition of the process, $\psi_{i+1}\neq \psi_i$ if,
    and only if, $X_{i+1}\neq X_i$.  Therefore, $\prob[\psi_{i+1}\neq
      \psi_i] = \prob[X_{i+1}\neq X_i]$ and, for $0 < |X_i| < n$, this
    probability is at least $n^{-2}$ because there is at least one
    edge from a mutant to a non-mutant.  From the definition of
    $\phi$, if $\psi_{i+1}\neq \psi_i$ then $|\psi_{i+1}-\psi_i|\geq
    n^{-1}\!$.  When $|\psi_i| < m$, it follows that
    \begin{equation}
    \label{eq:psi-squared}
        \expect[(\psi_{i+1}-\psi_i)^2\mid X_i] \geq n^{-4}\,.
    \end{equation}

    Let $t_0 = \min\,\{t: |\psi_t| = m\}$, which is a stopping time for
    the sequence $(\psi_t)_{t\geq 0}$ and is also the least $t$ for
    which $X_t = \emptyset$ or $X_t = V\!$.  Let
    \begin{equation*}
        Z_t = \begin{cases}
               \ \psi_t^2 - 2m\psi_t - n^{-4}t
                                             &\text{if } |\psi_t| < m \\
               \ 3m^2 - n^{-4}t_0            &\text{otherwise.}
               \end{cases}
    \end{equation*}

    We now show that $(Z_t)_{t\geq 0}$ is a submartingale.  This is
    trivial for $t\geq t_0$, since then we have $Z_{t+1} = Z_t$.  In
    the case where $t < t_0$,
    \begin{align*}
        \hspace{2em}&\hspace{-2em}
        \expect[Z_{t+1} - Z_t \mid X_t] \\
            &\geq\ \expect[\,\psi_{t+1}^2 - 2m\psi_{t+1} - n^{-4}(t+1)
                           - \psi_t^2 + 2m\psi_t + n^{-4}t \mid X_t] \\
            &=\ \expect[-2m(\psi_{t+1} - \psi_t)
                            + \psi_{t+1}^2 - \psi_t^2 - n^{-4} \mid X_t] \\
            &=\ \expect[\,2(\psi_t - m)(\psi_{t+1} - \psi_t)
                            + (\psi_{t+1} - \psi_t)^2 - n^{-4} \mid X_t] \\
            &\geq\ 0\,.
    \end{align*}
    The first inequality is because $3m^2\geq \psi_t^2 - 2m\psi_t$ for
    all $t$, since $|\psi_t|\leq m$.  The final inequality comes from
    equations \eqref{eq:psi-decreases} and~\eqref{eq:psi-squared}.
    Note also that $\expect[Z_{t+1} - Z_t\mid X_t]\leq
    6m^2 < \infty$ in all cases.

    We have
    \begin{equation*}
        \expect[Z_0]
            = \expect\big[
                  \big(m - \phi(X_0)\big)^2 - 2m\big(m-\phi(X_0)\big)
              \big] \\
            = \expect[\phi(X_0)^2] - m^2
    \end{equation*}
    and $\expect[Z_{t_0}] = 3m^2 - n^{-4}\expect[t_0]$.  $t_0$ is a
    stopping time for $(Z_t)_{t\geq 0}$ as it is the first time at
    which $Z_t = 3m^2-n^{-4}t$.  Therefore, the optional stopping
    theorem says that $\expect[Z_{t_0}]\geq \expect[Z_0]$, as long as
    $\expect[t_0] < \infty$, which we show below.  It
    follows, then, that
    \begin{equation*}
        3m^2 - n^{-4}\expect[t_0] \geq \expect[\phi(X_0)^2] - m^2,
    \end{equation*}
    which gives
    \begin{equation*}
        \expect[t_0] \leq n^4(4m^2 - \expect[\phi(X_0)^2])
            = n^4(\phi(G)^2 - \expect[\phi(X_0)^2])\,,
    \end{equation*}
    as required.

    It remains to establish that $t_0$ has finite expectation.
    Consider a block of $n$ successive stages $X_k, \dots, X_{k+n-1}$.
    If the Moran process has not already reached absorption by $X_k$,
    then $|X_k|\geq 1$.  Consider any sequence of reproductions by
    which a single mutant in $X_k$ could spread through the whole
    graph.  Each transition in that sequence has probability at least
    $n^{-2}$ so the sequence has probability at least $p = (n^{-2})^n$
    and, therefore, the probability of absorption within the block is
    at least this value.  But then the expected number of blocks
    before absorption is at most
    \begin{equation*}
        \sum_{i > 0} (1-p)^{i-1} = \frac{1}{1-(1-p)} = \frac{1}{p}\,.
    \end{equation*}
    and, therefore, $\expect[t_0] < \infty$ as required.
\end{proof}

\begin{corollary}
\label{cor:abs-martingale}
    (i) When $r=1$ and the initial single mutant is chosen uniformly
    at random, the expected absorption time of the Moran process is at
    most $t = \phi(G)^2 n^4\!$.  (ii) For any $\epsilon\in (0,1)$, the
    process reaches absorption within $t/\epsilon$ steps with
    probability at least $1-\epsilon$.
\end{corollary}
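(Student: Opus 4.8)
The plan is to derive both parts directly from Theorem~\ref{thrm:abs-martingale}, which already supplies the expected absorption time bound $n^4(\phi(G)^2 - \phi'_0)$. The corollary merely repackages this into a cleaner closed form and then adds a routine high-probability tail bound, so essentially no new work is required beyond two one-line arguments.

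For part~(i), I would observe that $\phi'_0 = \frac{1}{n}\sum_{x\in V}(\deg x)^{-2}$ is a sum of strictly positive terms and hence $\phi'_0 > 0$. Discarding this quantity from the bound of Theorem~\ref{thrm:abs-martingale} can only weaken it, so, writing $\tau$ for the absorption time,
\begin{equation*}
    \expect[\tau] \leq n^4\big(\phi(G)^2 - \phi'_0\big) \leq n^4\phi(G)^2 = t,
\end{equation*}
which is exactly the claimed bound. Nothing beyond this single inequality is needed.

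For part~(ii), I would apply Markov's inequality to the nonnegative random variable $\tau$, feeding in the bound $\expect[\tau] \leq t$ just established:
\begin{equation*}
    \prob\!\left[\tau \geq \frac{t}{\epsilon}\right]
        \leq \frac{\expect[\tau]}{t/\epsilon}
        = \frac{\epsilon\,\expect[\tau]}{t}
        \leq \epsilon.
\end{equation*}
Hence the process fails to reach absorption within $t/\epsilon$ steps with probability at most $\epsilon$, so it succeeds with probability at least $1-\epsilon$, as required.

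Since this is a direct consequence of an already-proved theorem together with a standard application of Markov's inequality, there is no genuine obstacle here. The only points deserving a moment's care are confirming the direction of the inequality in part~(i) (that $\phi'_0$ is \emph{subtracted}, so dropping it enlarges the bound to $t$) and noting that the hypothesis $\epsilon\in(0,1)$ guarantees $t/\epsilon > t \geq \expect[\tau]$, so that the tail bound in part~(ii) is nontrivial.
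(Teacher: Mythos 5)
Your proposal is correct and follows exactly the paper's own argument: part~(i) by dropping the positive term $\phi'_0$ from the bound of Theorem~\ref{thrm:abs-martingale}, and part~(ii) by Markov's inequality. Nothing further is needed.
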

\begin{proof}
    The first part is immediate from the previous theorem and the fact
    that $\expect[\phi(X_0)^2]>0$.  The second part follows by
    Markov's inequality.
\end{proof}

When the initial state is a single mutant chosen uniformly at random,
we have
\begin{equation*}
    \expect[\phi(X_0)^2]
        = \frac{1}{n} \sum_{x\in V(G)} \frac{1}{(\deg x)^2}
        < \frac{1}{n} \left(\sum_{x\in V(G)} \frac{1}{\deg x}\right)^{\!2}
        = \ \frac{\phi(G)^2}{n}\,,
\end{equation*}
so little is lost by discarding the $\expect[\phi(X_0)^2]$ term.

 \section{Approximation algorithms}
\label{sec:FPRAS}

We now have all the components needed to present our fully
polynomial randomized approximation schemes (FPRAS) for the problem of
computing the fixation probability of a graph, where $r\geq 1$, and
for computing the extinction probability for all $r>0$.  Recall that
an FPRAS for a function $f$ is a randomized algorithm $g$ that, given
input $X$, gives an output satisfying
\begin{equation*}
    (1-\epsilon)f(X) \leq g(X) \leq (1+\epsilon)f(X)
\end{equation*}
with probability at least $\frac{3}{4}$ and has running time
polynomial in both $|X|$ and~$\frac{1}{\epsilon}$.  Although the value
of $\frac{3}{4}$ is rather low for practical use, the same class of
problems has an FPRAS if we choose any probability $\frac{1}{2}<p<1$
\cite{JVV1986:Randgen}.  Furthermore, the probability that the result
is within a factor of $1\pm\epsilon$ of the true value can be
increased from $\tfrac{3}{4}$ to $1-\delta$ for any positive~$\delta$,
just by taking the median answer from $\bigO(\log \tfrac{1}{\delta})$
runs of the algorithm \cite[Lemma~6.1]{JVV1986:Randgen}.

In the following two theorems, we give algorithms whose running times
are polynomial in $n$, $r$ and $\frac{1}{\epsilon}$.  For the
algorithms to run in time polynomial in the length of the input, and
thus meet the definition of FPRAS, $r$ must be encoded in unary.

\begin{theorem}
    There is an FPRAS for \textsc{Moran fixation}, for $r\geq 1$.
\end{theorem}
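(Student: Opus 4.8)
The plan is to estimate $\fGr$ by a straightforward Monte Carlo procedure: simulate the Moran process many times, each from a uniformly random initial mutant and each truncated after a fixed polynomial number $t$ of steps, and report the fraction of runs that reach fixation, counting any run that has not absorbed by step $t$ as an extinction. The analysis has two independent sources of error to control --- the \emph{bias} introduced by truncation and the \emph{sampling error} from using finitely many runs. The crux is that the polynomial lower bound $\fGr \geq \frac{1}{n}$ from Lemma~\ref{lemma:fG-lowerbound}, valid precisely because $r \geq 1$, lets us keep both errors small while keeping $t$ and the number of runs polynomial.

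First I would fix the truncation time. By Corollary~\ref{cor:abs-time-whp} (for $r>1$) or Corollary~\ref{cor:abs-martingale} (for $r=1$), there is a polynomial horizon $t$, depending on $n$ and $\frac{1}{\delta}$, such that the process fails to absorb within $t$ steps with probability at most $\delta$. Let $\hat p$ denote the probability that a single truncated run reports fixation. Since reporting fixation within $t$ steps implies eventual fixation, $\hat p \leq \fGr$, and the deficit $\fGr - \hat p$ is at most the probability of fixating only after step $t$, hence at most $\delta$. Choosing $\delta = \frac{\epsilon}{3n}$ and invoking $\fGr \geq \frac{1}{n}$ gives $(1 - \frac{\epsilon}{3})\fGr \leq \hat p \leq \fGr$, so that estimating $\hat p$ to within a factor $1 \pm \frac{\epsilon}{3}$ suffices to estimate $\fGr$ to within $1 \pm \epsilon$.

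Next I would bound the number of runs. The truncated runs are independent Bernoulli trials with mean $\hat p = \Omega(\frac{1}{n})$, so by a standard Chernoff/estimator-theory argument~\cite{KL1983:Monte-Carlo}, taking $N = \bigO(\frac{1}{\hat p\, \epsilon^2}) = \bigO(\frac{n}{\epsilon^2})$ independent runs makes the empirical mean lie within a factor $1 \pm \frac{\epsilon}{3}$ of $\hat p$ with probability at least $\frac{3}{4}$. Composing this multiplicatively with the truncation estimate of the previous paragraph yields the desired relative error $\epsilon$ with the same success probability.

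Finally I would verify the running time. Each simulation step selects a reproducing vertex with probability proportional to fitness and a neighbour uniformly at random, costing time polynomial in $n$ and the encoding length of $r$; there are $t$ steps per run and $N$ runs, for total cost $\bigO(Nt)$ times the per-step cost. Here $N$ and $\delta^{-1}$ are polynomial in $n$ and $\frac{1}{\epsilon}$, while $t$ carries a factor $\frac{r}{r-1}$ from the absorption bounds. The one point requiring care --- and the reason $r$ is encoded in unary --- is that $\frac{r}{r-1}$ can be exponentially large in the \emph{binary} length of $r$ when $r$ is just above $1$; unary encoding forces it to be polynomial in the input size. The main obstacle is therefore not any single estimate but their interaction: one must simultaneously make the truncation bias negligible against the $\frac{1}{n}$ floor on $\fGr$ and keep the truncation horizon polynomial, which is exactly what the absorption-time corollaries together with the unary encoding of $r$ deliver.
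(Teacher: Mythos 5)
Your proposal is correct and follows essentially the same route as the paper: Monte Carlo simulation truncated at a polynomial horizon, with the truncation justified by Corollary~\ref{cor:abs-time-whp} (resp.\ Corollary~\ref{cor:abs-martingale} for $r=1$) and the sample complexity controlled via the lower bound $\fGr \geq \frac{1}{n}$ of Lemma~\ref{lemma:fG-lowerbound}. The only (harmless) differences are that you count unabsorbed runs as extinctions and bound the resulting bias against the $\frac{1}{n}$ floor, whereas the paper aborts with an error value and takes a union bound over the $N$ runs, and that your multiplicative Chernoff estimate gives $N = \bigO(n/\epsilon^2)$ in place of the paper's additive Hoeffding bound with $N = \bigO(n^2/\epsilon^2)$.
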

\begin{proof}
    The algorithm is as follows.  If $r=1$ then, by
    Lemma~\ref{lemma:fG-one}, we return $\frac{1}{n}$.
    Otherwise, we simulate the Moran process on $G$ for $T =
    \lceil\frac{8r}{r-1}Nn^4\rceil$ steps, $N = \lceil\frac{1}{2}
    \epsilon^{-2} n^2\ln 16\rceil$ times and compute the proportion of
    simulations that reached fixation.  If any simulation has not
    reached absorption (fixation or extinction) after $T$ steps, we
    abort and immediately return an error value.

    Note that each transition of the Moran process can be simulated in
    $\bigO(1)$ time.  Maintaining arrays of the mutant and non-mutant
    vertices allows the reproducing vertex to be chosen in constant
    time and storing a list of each vertex's neighbours allows the
    same for the vertex where the offspring is sent.  Therefore, the
    total running time is $\bigO(NT)$ steps, which is polynomial in
    $n$ and $\frac{1}{\epsilon}$, as required.

    It remains to show that the algorithm operates within the required
    error bounds.  For $i\in \{1, \dots, N\}$, let $X_i=1$ if the
    $i$th simulation of the Moran process reaches fixation and $X_i=0$
    otherwise.  Assuming all simulation runs reach absorption, the
    output of the algorithm is $p = \frac{1}{N}\sum_i X_i$.  By
    Hoeffding's inequality and writing $f = \fGr$, we have
    \begin{equation*}
        \prob[|p - f| > \epsilon f]
            \leq 2\exp(-2 \epsilon^2 f^2 N)
            \leq 2\exp(-f^2 n^2 \ln 16)
            \leq \tfrac{1}{8}\,,
    \end{equation*}
    where the final inequality is because, by
    Corollary~\ref{cor:fG-lowerbound}, $f\geq \frac{1}{n}$.

    Now, the probability that any individual simulation has not
    reached absorption after $T$ steps is at most $\frac{1}{8N}$ by
    Corollary~\ref{cor:abs-time-whp}.  Taking a union
    bound, the probability of aborting and returning an error because
    at least one of the $N$ simulations was cut off before reaching
    absorption is at most $\frac{1}{8}$.  Therefore, with probability
    at least $\frac{3}{4}$, the algorithm returns a value within a
    factor of $1\pm\epsilon$ of $\fGr$.
\end{proof}

Note that this technique fails for disadvantageous mutants ($r<1$)
because there is no analogue of Corollary~\ref{cor:fG-lowerbound} giving
a polynomial lower bound on $\fGr$.  As such, an exponential number of
simulations may be required to achieve the desired error probability.
However, we can give an FPRAS for the extinction probability for all
$r>0$.  Although the extinction probability is just $1-\fGr$, there is
no contradiction because a small relative error in $1-\fGr$ does not
translate into a small relative error in $\fGr$ when $\fGr$ is, itself,
small.

\begin{theorem}
    There is an FPRAS for \textsc{Moran extinction} for all $r>0$.
\end{theorem}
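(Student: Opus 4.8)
The plan is to construct an FPRAS for \textsc{Moran extinction} by simulating the Moran process $N$ times and outputting the proportion of runs that reach extinction, mirroring the structure of the preceding proof for \textsc{Moran fixation} but estimating $1-\fGr$ instead of $\fGr$. The crucial observation that makes this work for all $r>0$ (unlike fixation) is that the extinction probability $1-\fGr$ admits a polynomial lower bound: by Lemma~\ref{lemma:fG-upperbound} we have $\fGr \leq 1 - \frac{1}{n+r}$, so $1-\fGr \geq \frac{1}{n+r}$. This bound replaces the role played by Lemma~\ref{lemma:fG-lowerbound} in the fixation proof and ensures that a polynomial number of samples suffices for the Hoeffding concentration.

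First I would set the number of simulations to $N = \lceil \frac{1}{2}\epsilon^{-2}(n+r)^2 \ln 16\rceil$ and, depending on the regime of $r$, choose a cutoff $T$ on the number of steps using the absorption-time bounds: Corollary~\ref{cor:abs-time-disadv-whp} for $r<1$, Corollary~\ref{cor:abs-martingale} for $r=1$, and Corollary~\ref{cor:abs-time-whp} for $r>1$. In each case $T$ is chosen so that the probability any single run fails to absorb within $T$ steps is at most $\frac{1}{8N}$; since $r$ is encoded in unary, $N$ and $T$ remain polynomial in the input size and $\frac{1}{\epsilon}$, and each transition costs $\bigO(1)$ time, so the total running time $\bigO(NT)$ is polynomial.

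Next I would carry out the error analysis. Let $X_i = 1$ if the $i$th run reaches extinction and $0$ otherwise, and let $p = \frac{1}{N}\sum_i X_i$ be the algorithm's output; write $g = 1 - \fGr$ for the true extinction probability. Applying Hoeffding's inequality gives
\begin{equation*}
    \prob[\,|p - g| > \epsilon g\,]
        \leq 2\exp(-2\epsilon^2 g^2 N)
        \leq 2\exp(-2 g^2 (n+r)^2 \ln 16)
        \leq \tfrac{1}{8}\,,
\end{equation*}
where the middle inequality uses the sample size $N$ and the final inequality uses $g \geq \frac{1}{n+r}$ from Lemma~\ref{lemma:fG-upperbound}. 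A union bound over the $N$ runs bounds the probability of aborting (because some run did not absorb in time) by $\frac{1}{8}$, so with probability at least $\frac{3}{4}$ the output lies within a factor of $\epsilon$ of $1-\fGr$.

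The main point requiring care, rather than a genuine obstacle, is confirming that the lower bound $g \geq \frac{1}{n+r}$ is truly uniform over all $r>0$: unlike the fixation case, where the lower bound $\fGr \geq \frac{1}{n}$ fails for $r<1$, here Lemma~\ref{lemma:fG-upperbound} holds for every $r>0$, so the same sampling argument goes through across all three fitness regimes simultaneously. I would emphasize explicitly that this is why extinction, but not fixation, can be approximated for disadvantageous mutants, and I would make sure the three cutoff choices are stated so that the union bound over failure-to-absorb events is correctly calibrated against $N$.
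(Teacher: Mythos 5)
Your proposal matches the paper's proof essentially exactly: the same sample size $N = \lceil \frac{1}{2}\epsilon^{-2}(n+r)^2 \ln 16\rceil$, the same use of Lemma~\ref{lemma:fG-upperbound} to get the lower bound $1-\fGr \geq \frac{1}{n+r}$ in place of Lemma~\ref{lemma:fG-lowerbound}, and the same regime-dependent cutoffs $T$ calibrated via Corollaries~\ref{cor:abs-time-disadv-whp}, \ref{cor:abs-martingale} and~\ref{cor:abs-time-whp} so that the union bound over truncated runs contributes at most $\frac{1}{8}$. The argument is correct and follows the same route as the paper.
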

\begin{proof}
    The algorithm and its correctness proof are essentially as in the
    previous theorem.  If $r=1$, we return $1-\tfrac{1}{n}$.  Otherwise,
    we run $N=\lceil\frac{1}{2}\epsilon^{-2} (r+n)^2 \ln 16\rceil$
    simulations of the Moran process on $G$ for $T(r)$ steps each, where 
    \begin{align*}
        T(r) &= \begin{cases}
                \ \lceil\frac{8r}{r-1}Nn^4\rceil     &\text{ if $r>1$} \\
                \ \lceil\frac{8}{1-r}Nn^3\rceil      &\text{ if $r<1$.}
             \end{cases}
    \end{align*}
    If any simulation has not reached absorption within $T(r)$ steps,
    we return an error value; otherwise, we return the proportion $p$
    of simulations that reached extinction.

    Writing $\fbar=1-\fGr$ for the extinction probability, Hoeffding's
    inequality gives
    \begin{equation*}
        \prob[|p-\fbar| > \epsilon \fbar]
            \leq 2\exp(-2 \epsilon^2 \fbar^2 N)
            \leq 2\exp(-\fbar^2 (r+n)^2 \ln 16)
            \leq \tfrac{1}{8}\,,
    \end{equation*}
    with the final inequality because $\fbar\geq \tfrac{1}{r+n}$ by
    Lemma~\ref{lemma:fG-upperbound}.

    The probability that any given simulation run has not reached
    absorption within $T(r)$ steps is at most $\tfrac{1}{8N}$ by
    Corollary~\ref{cor:abs-time-whp} ($r>1$) or
    Corollary~\ref{cor:abs-time-disadv-whp} ($r<1$) so the algorithm meets
    the error bounds with probability at least $\tfrac{3}{4}$ by the
    same argument as before.
\end{proof}

It remains open whether other techniques could lead to an FPRAS for
\textsc{Moran fixation} when $r<1$.
 {\small

}


\begin{thebibliography}{10}

\bibitem{Aldous-online-book}
D.~J. Aldous and J.~A. Fill.
\newblock Reversible {M}arkov {C}hains and {R}andom {W}alks on {G}raphs.
\newblock Monograph in preparation. Available at
  \url{http://www.stat.berkeley.edu/aldous/RWG/book.html}.

\bibitem{Antal06}
T.~Antal and I.~Scheuring.
\newblock Fixation of strategies for an evolutionary game in finite
  populations.
\newblock {\em Bulletin of Mathematical Biology}, 68:1923--1944, 2006.

\bibitem{Berger01}
E.~Berger.
\newblock Dynamic monopolies of constant size.
\newblock {\em Journal of Combinatorial Theory, Series B}, 83:191--200, 2001.

\bibitem{Broom10}
M.~Broom, C.~Hadjichrysanthou and J.~Rycht\'{a}\v{r}.
\newblock Evolutionary games on graphs and the speed of the evolutionary
  process.
\newblock {\em Proceedings of the Royal Society A}, 466(2117):1327--1346, 2010.

\bibitem{Broom-two-results}
M.~Broom, C.~Hadjichrysanthou and J.~Rycht\'{a}\v{r}.
\newblock Two results on evolutionary processes on general non-directed graphs.
\newblock {\em Proceedings of the Royal Society A}, 466(2121):2795--2798, 2010.

\bibitem{Broom08}
M.~Broom and J.~Rycht\'{a}\v{r}.
\newblock An analysis of the fixation probability of a mutant on special
  classes of non-directed graphs.
\newblock {\em Proceedings of the Royal Society A}, 464(2098):2609--2627, 2008.

\bibitem{Broom09}
M.~Broom, J.~Rycht\'{a}\v{r} and B.~Stadler.
\newblock Evolutionary dynamics on small order graphs.
\newblock {\em Journal of Interdisciplinary Mathematics}, 12:129--140, 2009.

\bibitem{Durrett88}
R.~Durrett.
\newblock {\em Lecture Notes on Particle Systems and Percolation}.
\newblock Wadsworth Publishing Company, 1988.

\bibitem{EasleyK10}
D.~Easley and J.~Kleinberg.
\newblock {\em Networks, Crowds, and Markets: {R}easoning about a Highly
  Connected World}.
\newblock Cambridge University Press, 2010.

\bibitem{Gintis00}
H.~Gintis.
\newblock {\em Game Theory Evolving: A Problem-Centered Introduction to
  Modeling Strategic Interaction}.
\newblock Princeton University Press, 2000.

\bibitem{Haj1982:Hitting-time-drift}
B.~Hajek.
\newblock Hitting-time and occupation-time bounds implied by drift analysis
  with applications.
\newblock {\em Advances in Applied Probability}, 14(3):502--525, 1982.

\bibitem{HY2001:Drift}
J.~He and X.~Yao.
\newblock Drift analysis and average time complexity of evolutionary
  algorithms.
\newblock {\em Artificial Intelligence}, 127:57--85, 2001.

\bibitem{Hofbauer98}
J.~Hofbauer and K.~Sigmund.
\newblock {\em Evolutionary Games and Population Dynamics}.
\newblock Cambridge University Press, 1998.

\bibitem{Imhof05}
L.~A. Imhof.
\newblock The long-run behavior of the stochastic replicator dynamics.
\newblock {\em Annals of Applied Probability}, 15(1B):1019--1045, 2005.

\bibitem{JVV1986:Randgen}
M.~R. Jerrum, L.~G. Valiant and V.~V. Vazirani.
\newblock Random generation of combinatorial structures from a uniform
  distribution.
\newblock {\em Theoretical Computer Science}, 43:169--188, 1986.

\bibitem{Kandori93}
M.~Kandori, G.~J. Mailath and R.~Rob.
\newblock Learning, mutation, and long run equilibria in games.
\newblock {\em Econometrica}, 61(1):29--56, 1993.

\bibitem{Karlin75}
S.~Karlin and H.~M. Taylor.
\newblock {\em A First Course in Stochastic Processes}.
\newblock Academic Press, 2nd edition, 1975.

\bibitem{KL1983:Monte-Carlo}
R.~M. Karp and M.~Luby.
\newblock Monte-{C}arlo algorithms for enumeration and reliability problems.
\newblock In {\em Proceedings of 24th Annual IEEE Symposium on
  Foundations of Computer Science (FOCS)}, pages 56--64, 1983.

\bibitem{kempe05}
D.~Kempel, J.~Kleinberg and E.~Tardos.
\newblock Influential nodes in a diffusion model for social networks.
\newblock In {\em Proceedings of the 32nd International Colloquium on Automata,
  Languages and Programming (ICALP)}, volume 3580 of {\em Lecture Notes in
  Computer Science}, pages
1127--1138. Springer, 2005.

\bibitem{Nowak05}
E.~Lieberman, C.~Hauert and M.~A. Nowak.
\newblock Evolutionary dynamics on graphs.
\newblock {\em Nature}, 433:312--316, 2005.

\bibitem{Liggett85}
T.~M. Liggett.
\newblock {\em Interacting Particle Systems}.
\newblock Springer, 1985.

\bibitem{LRS2001:Markov-planar-lattice}
M.~Luby, D.~Randall and A.~Sinclair.
\newblock Markov chain algorithms for planar lattice structures.
\newblock {\em SIAM Journal on Computing}, 31(1):167--192, 2001.

\bibitem{MNRS:Evolution}
G.~B. Mertzios, S.~Nikoletseas, C.~Raptopoulos and P.~G. Spirakis.
\newblock Natural models for evolution on networks.
\newblock In {\em Proceedings of the 7th Workshop on Internet and Network Economics (WINE)}, pages 290--301, 2011.

\bibitem{Moran58}
P.~A.~P. Moran.
\newblock Random processes in genetics.
\newblock {\em Proceedings of the Cambridge Philosophical Society},
  54(1):60--71, 1958.

\bibitem{Mossel07}
E.~Mossel and S.~Roch.
\newblock On the submodularity of influence in social networks.
\newblock In {\em Proceedings of the 39th Annual ACM Symposium on Theory of
  Computing (STOC)}, pages 128--134, 2007.

\bibitem{Nev1975:Martingales}
J.~Neveu.
\newblock {\em Discrete-Parameter Martingales}.
\newblock North-Holland, 1975.

\bibitem{nowak06-book}
M.~A. Nowak.
\newblock {\em Evolutionary Dynamics: Exploring the Equations of Life}.
\newblock Harvard University Press, 2006.

\bibitem{Ohtsuki06}
H.~Ohtsuki and M.~A. Nowak.
\newblock Evolutionary games on cycles.
\newblock {\em Proceedings of the Royal Society B}, 273(1598): 2249--2256, 2006.

\bibitem{RS2008:Smallworlds}
J.~Rycht\'{a}\v{r} and B.~Stadler.
\newblock Evolutionary dynamics on small-world networks.
\newblock {\em International Journal of Computational and Mathematical
  Sciences}, 2(1):1--4, 2008.

\bibitem{Sandholm11}
W.~H. Sandholm.
\newblock {\em Population Games and Evolutionary Dynamics}.
\newblock MIT Press, 2011.

\bibitem{SR2011:FastFixProb}
P.~Shakarian and P.~Roos.
\newblock Fast and deterministic computation of fixation probability
  in evolutionary graphs.
\newblock In {\em Proceedings of 6th International Conference on
  Computational Intelligence and Bioinformatics}, 2011.

\bibitem{Taylor04}
C.~Taylor, D.~Fudenberg, A.~Sasaki and M.~A. Nowak.
\newblock Evolutionary game dynamics in finite populations.
\newblock {\em Bulletin of Mathematical Biology}, 66(6):1621--1644, 2004.

\bibitem{Taylor06}
C.~Taylor, Y.~Iwasa and M.~A. Nowak.
\newblock A symmetry of fixation times in evolutionary dynamics.
\newblock {\em Journal of Theoretical Biology}, 243(2):245--251, 2006.

\bibitem{Traulsen09}
A.~Traulsen and C.~Hauert.
\newblock Stochastic evolutionary game dynamics.
\newblock In {\em Reviews of Nonlinear Dynamics and Complexity}, volume~2.
  Wiley, 2009.

\end{thebibliography}
\end{document}